\documentclass[twoside,a4paper]{article}

\usepackage{amsmath,graphicx,amssymb,fancyhdr,amsthm,enumerate,textcomp, amscd} 
\usepackage[usenames]{color}
\newtheorem{thm}{Theorem}[section] 
\newtheorem{cor}[thm]{Corollary}

\theoremstyle{definition}

\theoremstyle{remark}  
  
\def\beq{\begin{eqnarray}}  
\def\eeq{\end{eqnarray}}  
\def\bsp{\begin{split}}  
\def\esp{\end{split}}

\def\d{\mathrm{d}}

\def\R{ {\sf R}}
\def\S{{\sf S}}  
\def\A{{\sf A}}

\newcommand{\mf}[1]{{\mathfrak #1}}   
   
\newcommand{\mb}[1]{{\mathbb #1}}   
   
\newcommand{\mbold}[1]{\mbox{\boldmath{\ensuremath{#1}}}}   
\newcommand{\inner}[2]{\left\langle{#1},{#2}\right\rangle}  
\newcommand{\iinner}[2]{\left\langle\!\left\langle{#1},{#2}\right\rangle\!\right\rangle}

\def \bg {\mbox{{\mbold g}}}   

\begin{document}   
   
\title{\Large\textbf{A Wick-rotatable metric is purely electric}}  
\author{{\large\textbf{Christer Helleland} and \textbf{Sigbj\o rn Hervik }    }
 \vspace{0.3cm} \\     
Faculty of Science and Technology,\\     
 University of Stavanger,\\  N-4036 Stavanger, Norway         
\vspace{0.3cm} \\      
\texttt{christer.helleland@uis.no} \\
\texttt{sigbjorn.hervik@uis.no} }     
\date{\today}     
\maketitle   
\pagestyle{fancy}   
\fancyhead{} 
\fancyhead[EC]{Helleland  and  Hervik}   
\fancyhead[EL,OR]{\thepage}   
\fancyhead[OC]{Wick-rotatable metrics}   
\fancyfoot{} 

\begin{abstract}
We show that a metric of arbitrary dimension and signature which allows for a standard Wick-rotation to a Riemannian metric necessarily has a purely electric Riemann and Weyl tensor. 
\end{abstract}

\section{Introduction} 
In quantum theories a Wick-rotation is a mathematical trick to relate Minkowski space to Euclidean space by a complex analytic extension to imaginary time. This enables us to relate a quantum mechanical problem to a statistical mechanical one relating time  to the inverse temperature. This trick is highly successful and is used in a wide area of physics, from statistical and quantum mechanics to Euclidean gravity and exact solutions. 

In spite of its success, there is a question about its range of applicability. A question we can ask is: {\it Given a spacetime, does there exist a Wick-rotation to transform the metric to a Euclidean one?}

Here we will give a partial answer to this question and will give a necessary condition for a Wick-rotation (as defined below) to exist. However, before we prove our main theorem, we need to be a bit more precise with what we mean by a Wick-rotation. Consider a pseudo-Riemannian metric (of arbitrary dimension and signature). 
We need to allow for more  general coordinate transformations than the real diffeomorphisms preserving the metric signature -- namely to complex analytic continuations of the real metric \cite{complexGR,exact} . 

Consider a point $p$ and a neighbourhood, $U$, of $p$. Assume this nighbourhood is an analytic neighbourhood and that $x^{\mu}$ are coordinates on $U$ so that $x^{\mu}\in \mathbb{R}^n$. We will adapt the coordinates to the point $p$ so that $p$ is at the origin of this coordinate system. Consider now the complexification of $x^{\mu}\mapsto x^{\mu}+iy^{\mu}=z^{\mu}\in \mathbb{C}^n$. This complexification enables us to consider the complex analytic neighbourhood $U^{\mathbb C}$ of $p$. 

Furthermore, let $g_{\mu\nu}^{\mathbb C}$ be a complex bilinear form (a holomorphic metric) induced by the analytic extension of the metric: 
\[ g_{\mu\nu}(x^{\rho})\d x^\mu\d x^{\nu}\mapsto  g^{\mathbb C}_{\mu\nu}(z^{\rho})\d z^\mu\d z^{\nu}.\] 
Next, consider a real analytic submanifold containing $p$: $\bar{U}\subset U^{\mathbb C}$ with coordinates $\bar{x}^\mu\in \mathbb{R}^n$. The imbedding $\iota:\bar{U}\mapsto U^{\mathbb C}$ enables us to pull back the complexified metric $\bg^{\mathbb C}$ onto $\bar{U}$:
\beq
\bar{\bg}\equiv \iota^*{\bg}^{\mathbb{C}}. 
\eeq 
In terms of the coordinates $\bar{x}^\mu$: $\bar{\bg}=\bar{g}_{\mu\nu}(\bar{x}^\rho)\d \bar{x}^\mu\d \bar{x}^{\nu}$. This bilinear form may or may not be real. However, \emph{if the bilinear form $\bar{g}_{\mu\nu}(\bar{x}^\rho)\d \bar{x}^\mu\d \bar{x}^{\nu}$ is real (and non-degenerate) then we will call it an analytic extension of $g_{\mu\nu}(x^{\rho})\d x^\mu\d x^{\nu}$ with respect to $p$,} or simply {\it a Wick-rotation} of the real metric $g_{\mu\nu}(x^{\rho})\d x^\mu\d x^{\nu}$. This clearly generalises the concept of Wick-rotations from the standard Minkowskian setting to a more general setting \cite{OP}.

In the following, let us call the Wick-rotation, in the sense above, for $\bar{\phi}$; i.e., $\bar{\phi}:U\rightarrow \bar{U}$. We note that this transformation is complex, and we can assume, since $U$ is real analytic, that $\bar{\phi}$ is analytic.

The Wick-rotation in the sense above, leaves the point $p$ stationary. It therefore induces a linear transformation, $M$, between the tangent spaces $T_pU$ and $T_p\bar{U}$. The transformation $M$ is complex and therefore may change the metric signature; consequently, even if the metric $\bar{g}_{\mu\nu}$ is real, it does not necessarily need to have the same signature of $g_{\mu\nu}$.

Consider now the curvature tensors, $R$ and $\nabla^{(k)}R$ for $g_{\mu\nu}$, and  $\bar{R}$ and $\bar{\nabla}^{(k)}\bar{R}$ for $\bar{g}_{\mu\nu}$. Since both metrics are real, their curvature tensors also have to be real. The analytic continuation, in the sense above, induces a linear transformation of the tangent spaces; consequently, this would relate the Riemann tensors $R$ and $\bar{R}$ through a complex linear transformation. It is useful to introduce an orthonormal frame ${\bf e}_{\mu}$. The orthonormal frames ${\bf e}_{\mu}$ and $\bar{{\bf e}}_{\mu}$ are related through their complexified frame ${\bf e}^{\mb{C}}_{\mu}$.  We can define a complex orthonormal frame requiring the inner product\footnote{This is a not really a proper inner product since it is not positive definite, but rather a $\mb{C}$-bilinear non-degenerate form defining a \emph{holomorphic inner product}.}  $\inner{ {\bf e}^{\mb{C}}_{\mu}}{{\bf e}^{\mb{C}}_{\nu}}=\delta_{\mu\nu}$. This inner product is invariant under the complex orthogonal transformations,  $O(n,\mb{C})$. The real frames ${\bf e}_{\mu}$ and $\bar{{\bf e}}_{\mu}$ are obtained by restricting the complex frame. As an example, consider the standard holomorphic inner product space $(\mb{C}^n,{\bg}^{\mb C}_0)$ and $({\bf e}^{\mb{C}}_1,...,{\bf e}^{\mb{C}}_n)$ the standard basis. Then a real subspace is $V={\rm span}_{\mb{R}}(i{\bf e}^{\mb{C}}_1,...,i{\bf e}^{\mb{C}}_p,{\bf e}^{\mb{C}}_{p+1},...,{\bf e}^{\mb{C}}_n)$, and the corresponding metric (obtained from ${\bg}^{\mb C}_0$ by restriction) is real. All such real subspaces $V$ (of different signatures) are obtained from such identifications and hence different real subspaces $V$ are related via the action of the complex orthogonal group $O(n,\mb{C})$ (for more details, see e.g. \cite{PV,Holo}). 

Hence, we consider the real vector spaces $T_pU$ and $T_p\bar{U}$ as embedded in the complexified vector space $(T_pU)^{\mb{C}} \cong (T_p\bar{U})^{\mb{C}}$.
The real frames  are thus related though a restriction of a complex frame having an $O(n,\mb{C})$ structure group. If moreover the tangent spaces $T_pU$ and $T_p\bar{U}$ are embedded: $$T_pU, T_p\bar{U}\hookrightarrow (T_pU)^{\mb{C}}\cong (T_p\bar{U})^{\mb{C}},$$ such that they form a \emph{compatible triple} \footnote{Let $W$ and $\widetilde{W}$ be real slices of a holomorphic inner product space: $(E,g)$. Assume they are both real forms of $W^{\mb{C}}\subset (E,g)$. Let $V$ be another real slice of $E$, and a real form of $W^{\mb{C}}$, with Euclidean signature. Suppose $W,\widetilde{W}$ and $V$ are pairwise compatible (i.e their conjugation maps commute pairwise), then a triple: $(W,\widetilde{W}, V)$, will be called a \emph{compatible triple}. \textbf{Examples}: $\Big{(}\mathbb{R}\oplus i\mathbb{R}, i\mathbb{R}\oplus \mathbb{R}, \mathbb{R}^2 \Big{)}$ with $E:=\mb{C}^2$, and $\Big{(} \mathfrak{o}(p,q), \mathfrak{o}(\tilde{p},\tilde{q}), \mathfrak{o}(n)\Big{)}$ with $E:=\mathfrak{o}(n,\mb{C})$ and $g:=\kappa(-,-)$ (the Killing form).}, then we shall say that the real submanifolds: $U$ and $\bar{U}$, are related through a \emph{standard Wick-rotation}. A standard Wick-rotation allows us to choose commuting Cartan involutions of the real metrics. 

Note the special case where $\bar{U}$ is Riemannian, then the condition of being Wick-rotated by a standard Wick-rotation, is just the condition that the conjugation maps of $T_pU$, and $T_p\bar{U}$ must commute when embedded into $(T_pU)^{\mb{C}}$, i.e $T_pU$ and $T_p\bar{U}$ are compatible real forms. 

We refer to the manuscript \cite{Holo}, for more details about standard Wick-rotations and the connection with real GIT, and the special case of $\bar{U}$ being Riemannian.

By using $\bar{\phi}$ we can relate the metrics $\bg=\bar{\phi}^*\bar{\bg}$. Since the map is analytic (albeit complex), the curvature tensors are also related via $\bar\phi$. If $R$ and $\bar{R}$ are the Riemann curvature tensors for $U$ and $\bar{U}$ respectively, then these are related, using an orthonormal frame, via an $O(n,\mb{C})$ transformation. Consider the components of the Riemann tensor as a vector in some $\mb{R}^N\subset \mb{C}^N$. If there exists a Wick-rotation of the metric at $p$, then the (real) Riemann curvature tensors of $U$ and $\bar{U}$ must be real restrictions of vectors that lie in the same $O(n,\mb{C})$ orbit in $\mb{C}^N$.  

\paragraph{Note:} This definition of a Wick-rotation does not include the more general analytic continuations defined by Lozanovski \cite{Loz}. In particular, we consider \emph{one} particular metric (thus not a family of them) and we require that the point $p$ is fixed and is therefore more of a complex rotation.

\paragraph{} 
In the following we will utilise the study of real orbits of semi-simple groups, see e.g. \cite{RS,EJ}. In particular, the considerations made in \cite{minimal} will be useful. For a more general introduction to the structure of Lie algebras including the Cartan involution, see, for example \cite{Helgason,Knapp}. 

\section{The electric/magnetic parts of a tensor}
Following \cite{minimal}, we can introduce the electric and magnetic parts of a tensor by considering the eigenvalue decomposition  of the tensor under the Cartan involution $\theta$ of the real Lie algebras $\mf{o}(p,q)$. This involution can be extended to all tensors, and to  vectors ${\bf v}\in T_pM$ in particular. Considering an orthonormal frame, so that: 
\[ \bg({\bf e}_{\mu},{\bf e}_{\mu})=\begin{cases}
-1, & 1\leq \mu\leq p\\ +1 &  p+1\leq \mu\leq p+q=n,\end{cases} \] 
 the $\theta: T_pM\rightarrow T_pM$, can be defined as the linear operator:
\[ \theta({\bf e}_{\mu})=\begin{cases}
-{\bf e}_{\mu}, & 1\leq \mu\leq p\\ +{\bf e}_{\mu} &  p+1\leq \mu\leq p+q=n.\end{cases} \] 
Clearly, this implies that the bilinear map:
\[ \inner{X}{Y}_{\theta}:=\bg(\theta(X),Y),\qquad X,Y\in T_pM \]
defines a positive definite inner-product on $T_pM$. This Cartan involution can be extended tensorially to arbitrary tensor products.

Given a Cartan involution $\theta$, then since $\theta^2={\rm Id}$, its eigenvalues are $\pm 1$ and any tensor $T$ has an eigenvalue decomposition:
\[ T=T_++T_-, \qquad \text{where} ~~\theta(T_{\pm})=\pm T_{\pm}.\] 
A space is called {\it purely electric} (PE) if there exists a Cartan involution so that the Weyl tensor decomposes as $C=C_+$ \cite{minimal}. Furthermore, a space is called {\it purely magnetic} (PM) if the Weyl tensor decomposes as $C=C_-$. If this property occurs also for the Riemann tensor, we call the space Riemann purely electric (RPE) or magnetic (RPM), respectively. Clearly, RPE implies PE. 



\section{The Riemann curvature operator}
The Riemann curvature tensor can (pointwise) be seen as a bivector operator: 
\[ {\rm Riem}: \wedge^2\Omega_p(M) \rightarrow \wedge^2\Omega_p(M). \] 
In a pseudo-Riemannian space of signature $(p,q)$ the metric $\bg$ will provide an isomorphism between the space of bivectors, $\wedge^2\Omega_p(M)$, and the Lie algebra $\mf{g}=\mf{o}(p,q)$. This can be seen as follows.  The Lie algebra $\mf{o}(p,q)$ is defined through the action of $O(p,q)$ on the tangent space $T_pM$: For any $G\in O(p,q)$,  $G: T_pM \rightarrow T_pM$ so that $\bg( G\cdot v, G\cdot u)=\bg(v,u) $ for all $v,u\in T_pM$. Using the exponential map $\exp: \mf{o}(p,q)\rightarrow O(p,q)$, we get the requirement that $\bg( X(v), u)+\bg(v,X(u))=0$ for any $X\in \mf{o}(p,q)$. Consequently, $X$ is antisymmetric with respect to the metric $\bg$. In terms of the basis vectors, we can write $X=(X^\mu_{~\nu})$ and the antisymmetry condition implies that by raising an index we get 
$X^{\mu\nu}=-X^{\nu\mu}$ and can therefore be considered as a bivector. Since the dimensions match, the metric thus provides with an isomorphism between the Lie algebra $\mf{o}(p,q)$ and the space of bivectors $\wedge^2\Omega_p(M)$ at a point\footnote{Indeed, this is a mere consequence of the fact  that there is an $O(p,q)$-module isomorphism between $T_pM$ and $T^*_pM$.}. 

Consequently, the Riemann curvature operator can also be viewed as an endomorphism of $V:=\mf{g}$ treated as a vector space. Consider therefore any $\R\in {\rm End}(V)$:
\[ \R:V\rightarrow V.\] This endomorphism can be split in a symmetric and anti-symmetric part, $\R=\S+\A$, with respect to the metric induced by $\bg$ (which we also will call $\bg$ and is proportional to the Killing form $\kappa$ on $V$)\footnote{That the metric induced by $\bg$ is proportional to the Killing form can be seen either by explicit computation, or from considering $\kappa$ as a even-ranked tensor over $V^*\otimes V^*$ which is invariant under the action of $O(p,q)$. By, e.g., section 5.3.2 in \cite{GW}, this tensor is necessarily proportional to the metric tensor on $V$ induced by $\bg$. }: 
\[ \bg(\S(x),y)=\bg(\S(y),x), \quad \bg(\A(x),y)=-\bg(\A(y),x)\qquad \forall x,y \in \mf{g}. \]
This metric is invariant under the Lie group action of $G=O(p,q)$: 
\[ \bg(h\cdot x,h\cdot y)=\bg(x,y), \] where $h\cdot x$ is the natural Lie group action on the Lie algebra given by the adjoint: $h\cdot x:={\rm Ad}_h(x)=h^{-1}xh$.  

Consider now a Cartan involution $\theta: \mf{g}\rightarrow \mf{g}$. Then we define the inner-product on $V=\mf{g}$ as follows: 
\[ \inner{x}{y}_{\theta}=\bg(\theta(x),y), \]
which is just proportional to $\kappa_{\theta}(-,-):=-\kappa(-,\theta(-))$.
We can now, similarly, split any $\R\in {\rm End}(V)$ in a symmetric and anti-symmetric part, $\R=\R_++\R_-$, with respect to the inner-product $\inner{-}{-}_{\theta}$:
\[ \inner{\R_+(x)}{y}_{\theta}=\inner{\R_+(y)}{x}_{\theta}, \quad \inner{\R_-(x)}{y}_{\theta}=-\inner{\R_-(y)}{x}_{\theta}, \qquad \forall x,y \in \mf{g}. \]

We shall denote $V=\mathfrak{t}\oplus \mathfrak{p}$, for the Cartan decomposition w.r.t $\theta$, where $\mathfrak{t}$ is the compact part and $\mathfrak{p}$ is the non-compact part.

Suppose now that the real submanifolds $U$ and $\bar{U}$ are two Wick-rotatable spaces (of the same dimension) by a standard Wick-rotation at a fixed intersection point $p$, but with one of the real slices being Riemannian. So we can set $V:=\mathfrak{o}(p,q)$ as before, and introduce (similarly as with $V$ above), $\tilde{V}:=\mathfrak{o}(n)$, a compact real form of $V^{\mb{C}}:=\mathfrak{o}(n,\mb{C})$. These real forms $V$ and $\tilde{V}$, will naturally be compatible when embedded into $V^{\mb{C}}$, w.r.t to a standard Wick-rotation, i.e it lets us fix a Cartan involution $\theta$, such that $\mathfrak{t}=V\cap \tilde{V}$, and $\mathfrak{p}=V\cap i\tilde{V}$. Again we refer to the paper \cite{Holo} for details.

The space of endomorphisms, ${\rm End}(V)$, is also a vector space with the group action given by conjugation: $$(g\cdot X)(v):=gX(g^{-1}vg)g^{-1}, \ \ X\in {\rm End(V)}, \ v\in V, \ g\in G.$$ Call this action $\rho$. We can thus define $\mathcal{V}:={\rm End}(V)$, and extend the Cartan involution, $\theta$, as well as ${\bg}$ tensorially to $\mathcal{V}$. We define analogously an inner product on $\mathcal{V}$:
\[ \iinner{X}{Y}_{\theta}=\bg(\theta(X),Y), \qquad X,Y\in{\mathcal{V}}. \]

The inner product can assume to have the following properties (see \cite{RS}) w.r.t the action $\rho$:

\begin{enumerate}	
\item{} The inner product is $K$-invariant, where $K\cong O(p)\times O(q)$ is the maximally compact subgroup of $G$ with Lie algebra $\mathfrak{t}$.
\item{} $d\rho(\mathfrak{t}):\mathcal{V}\rightarrow\mathcal{V}$ consists of skew-symmetric maps w.r.t $\iinner{X}{Y}_{\theta}$.
\item{} $d\rho(\mathfrak{p}):\mathcal{V}\rightarrow\mathcal{V}$ consists of symmetric maps w.r.t $\iinner{X}{Y}_{\theta}$. 
\end{enumerate}
With such an inner product, enables us to apply the results in \cite{RS}, i.e we can make use of minimal vectors for determining the closure of real orbits. 
\\

Defining $\tilde{\mathcal V}:=\rm End(\tilde{V})$ similarly, we have ${\mathcal V},\tilde{\mathcal V}\subset {\mathcal V}^\mb{C}$ where $\mathcal{V}^{\mathbb{C}}:={\rm End}(V^{\mathbb{C}})$. Now since $V$ and $\tilde{V}$ are real forms of $V^{\mathbb{C}}$ then $\mathcal{V}$ and $\tilde{\mathcal{V}}$ are real forms of $\mathcal{V}^{\mathbb{C}}$. This is seen in the following way. A map $\R\in \mathcal{V}$ can be extended to the complex linear map $\R^{\mathbb{C}}\in \mathcal{V}^{\mb{C}}$ by defining: $$\R^{\mathbb{C}}(x+iy):=\R(x)+i\R(y), \ \ x,y\in V.$$ So we view a map $\R$ as the complex linear map $\R^{\mathbb{C}}$. Thus regard $\tilde{\mathcal{V}}$ like this as well. We shall just write $\R$ instead of $\R^{\mb{C}}$. 

We thus assume we have two endomorphisms (the Riemann curvature operators): $\R:V\rightarrow V$ (arbitrary pseudo-Riemannian), and $\tilde{\R}:\tilde{V}\rightarrow \tilde{V}$ (Riemannian). Now since we have the two real slices: $U$ and $\bar{U}$, which are Wick-rotated at the point $p$, then necessarily $\R\in\mathcal{V}$ and $\tilde{\R}\in\tilde{\mathcal{V}}$ must be conjugated by an element $g\in G^{\mb{C}}:=O(n,\mb{C})$. 

Set now $G:=O(p,q)$ (with Lie algebra $V:=\mathfrak{o}(p,q)$) and $\tilde{G}:=O(n)$ (with Lie algebra $\tilde{V}:=\mathfrak{o}(n)$) for the real forms embedded into $G^{\mb{C}}$ (with Lie algebra $V^{\mb{C}}:=\mathfrak{o}(n,\mb{C})$) w.r.t a standard Wick-rotation\footnote{$G$ and $\tilde{G}$ are the structure groups of the real metrics restricting from the holomorphic metric, and thus consist of isometries: $T_pU\rightarrow T_pU$ and $T_p\bar{U}\rightarrow T_p\bar{U}$ of the real metrics respectively. These groups are naturally embedded into $O(n,\mb{C})$ as real forms, by complexification: $f\mapsto f^{\mb{C}}$.}. Now we have a commutative diagram of conjugation actions:

\beq
\begin{CD}
G^{\mb{C}} @>\rho^{\mb{C}}>> GL(\mathcal{V}^{\mb{C}}) \\
@A{i}AA @A{i}AA \\
G @>\rho>> GL(\mathcal{V})
\end{CD}
\eeq

Where $\rho^{\mb{C}}$ is also the action given by conjugation, where $G^{\mb{C}}$ is viewed as a real Lie group, and $\mathcal{V}^{\mb{C}}$ is also viewed as a real vector space. We similarly have such a diagram for the the group $\tilde{G}$, where the conjugation action: $\tilde{\rho}$, on $\tilde{\mathcal{V}}$ also extends to $\rho^{\mb{C}}$. 

Now our real Riemann curvature operators from $U$ and $\bar{U}$: $\R$ and $\tilde{\R}$, will now lie in the same complex orbit, i.e $G^{\mb{C}}\cdot \R=G^{\mb{C}}\cdot \tilde{\R}$. 
\\

So therefore in what follows, we will consider the real orbits, $G\cdot\R$, $\tilde{G}\cdot\tilde{\R}$ and its complexified orbit $G^{\mb{C}}\cdot\R$ defined by the conjugation action of the group on an endomorphism: $\R\in \mathcal{V}$ and $\tilde{\R}\in \tilde{\mathcal{V}}$, as follows \cite{RS,EJ, minimal}
\beq
 G\cdot\R&:=&\{ h\cdot \R ~ |~h\in O(p,q) \} \subset \mathcal{V}\nonumber \\
  \tilde{G}\cdot\tilde{\R}&:=&\{ h\cdot \tilde{\R} ~ |~h\in O(n) \} \subset \tilde{\mathcal{V}}\nonumber \\
G^{\mb{C}}\cdot\R&:=&\{ h\cdot \R~ |~h\in O(n,\mb{C}) \} \subset \mathcal{V}^{\mb{C}}. \nonumber 
\eeq

\begin{thm}\label{T}
Suppose $\R=\S+\A\in \mathcal{V}$ where $\S,\A$ are the symmetric/antisymmetric parts w.r.t $\bg$ respectively. Assume that there exists a (real) $\tilde{\R}\in G^{\mb{C}}\cdot\R$ so that $\tilde{\R}\in\tilde{\mathcal{V}}$ i.e we assume: $G^{\mb{C}}\cdot \R=G^{\mb{C}}\cdot\tilde{\R}$. Then there exists a Cartan involution $\theta'$ of $V$ such that $\R_{+}=\S$ and $\R_-=\A$, where $\R_+,\R_-$ are the symmetric/antisymmetric parts w.r.t $\langle-,-\rangle_{\theta'}$ respectively.
\end{thm}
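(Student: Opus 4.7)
The plan is first to recast the conclusion as a commutation relation. A short computation shows that the adjoint of $\R$ with respect to $\inner{-}{-}_{\theta'}$ equals $\theta'\R^{t}\theta'$, where $\R^{t}$ denotes the $\bg$-transpose and we use that $\theta'$ is $\bg$-symmetric. Substituting $\R^{t}=\S-\A$ into $\R_{+}=\tfrac{1}{2}(\R+\theta'\R^{t}\theta')=\S$ and using $(\theta')^{2}=\mathrm{Id}$ collapses the condition to the single identity $\theta'\R=\R\theta'$ on $V$. Thus the task reduces to exhibiting a Cartan involution $\theta'$ of $V=\mathfrak{o}(p,q)$ that commutes with $\R$ as a linear endomorphism of $V$.

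\textbf{Closedness of the orbit.} Because $\tilde{G}=O(n)$ is compact, the real orbit $\tilde{G}\cdot\tilde{\R}$ is compact in $\tilde{\mathcal{V}}\subset\mathcal{V}^{\mathbb{C}}$, and in particular closed there. By Birkes' criterion for real orbits sitting inside complex ones (see \cite{RS, EJ}), this forces the complex orbit $G^{\mathbb{C}}\cdot\tilde{\R}=G^{\mathbb{C}}\cdot\R$ to be closed in $\mathcal{V}^{\mathbb{C}}$, and consequently the real orbit $G\cdot\R$ is closed in $\mathcal{V}$.

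\textbf{Minimal vector and transport.} The properties (1)--(3) of $\iinner{-}{-}_{\theta}$ stated before the theorem put the action of $G$ on $(\mathcal{V},\iinner{-}{-}_{\theta})$ into the Richardson--Slodowy framework of \cite{RS}. Applying its minimal-vector theorem to the closed real orbit $G\cdot\R$ produces $g_{0}\in G$ such that $\R^{\star}:=g_{0}\cdot\R$ has least norm in its $G$-orbit. The minimality condition $\iinner{d\rho(Y)\R^{\star}}{\R^{\star}}_{\theta}=0$ for all $Y\in\mathfrak{p}$, together with the explicit form $d\rho(Y)\R^{\star}=[\mathrm{ad}(Y),\R^{\star}]$ of the conjugation action and Property (3), is then used to conclude that the tensorial extension of $\theta$ fixes $\R^{\star}$, i.e.\ $\theta\R^{\star}=\R^{\star}\theta$ on $V$. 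I expect this last implication to be the main technical obstacle: it requires substituting the commutator into the moment-map identity, exploiting cyclicity of the trace together with the non-degeneracy of the Killing form on $\mathfrak{p}$, along the lines of the minimal-vector analysis carried out in \cite{minimal}. Given the identity, one finally sets $\theta':=\mathrm{Ad}_{g_{0}^{-1}}\circ\theta\circ\mathrm{Ad}_{g_{0}}$, which is a Cartan involution of $V$ (being conjugate to $\theta$ inside $\mathrm{Aut}(V)$). Conjugating $\theta\R^{\star}=\R^{\star}\theta$ by $\mathrm{Ad}_{g_{0}^{-1}}$ and using $\R^{\star}=\mathrm{Ad}_{g_{0}}\R\,\mathrm{Ad}_{g_{0}^{-1}}$ yields $\theta'\R=\R\theta'$, which by the reformulation completes the proof.
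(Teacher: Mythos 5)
Your reformulation of the conclusion as the commutation relation $[\theta',\R]=0$ is correct, and your first two steps (closedness of $\tilde{G}\cdot\tilde{\R}$ by compactness, transfer of closedness to $G^{\mb{C}}\cdot\R$ and then to $G\cdot\R$, existence of a minimal vector) match the paper. The gap is in your third step: the moment-map identity $\iinner{d\rho(Y)\R^{\star}}{\R^{\star}}_{\theta}=0$ for all $Y\in\mf{p}$ does \emph{not} imply $[\theta,\R^{\star}]=0$. Writing $\Theta$ for the tensorial extension of $\theta$ to $\mathcal{V}$ and decomposing $\R^{\star}$ into its $\Theta$-eigencomponents, the identity only says that $d\rho(Y)$ applied to the $+1$ part is orthogonal to the $-1$ part; it does not kill the $-1$ part. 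A concrete counterexample to your claimed implication: for $H\in\mf{p}$ semisimple, $\R^{\star}:=\mathrm{ad}(H)$ is a minimal vector of its closed orbit (the moment-map pairing vanishes by invariance of the Killing form), yet it \emph{anti}commutes with $\theta$. No amount of trace cyclicity will close this gap, because you have at that point used the hypothesis $\tilde{\R}\in\tilde{\mathcal{V}}$ only to get closedness, and closedness alone is not enough (the example above has a closed orbit with no Riemannian representative).

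What the paper does instead, and what you are missing, is to use the compatibility of the triple in two further essential ways. First, by Richardson--Slodowy, a minimal vector $X$ of the real orbit $G\cdot\R$ is automatically a minimal vector of the complex orbit $G^{\mb{C}}\cdot\R$. Second, because $\tilde{G}$ is a compact real form compatible with $G$ and the norm is built from $\theta$ and $\tilde{V}$, the set of minimal vectors of the complex orbit is exactly the compact orbit: $G^{\mb{C}}\cdot\R\cap\mathcal{M}(G^{\mb{C}},\mathcal{V}^{\mb{C}})=\tilde{G}\cdot\tilde{\R}$. Hence $X\in\mathcal{V}\cap\tilde{\mathcal{V}}$, i.e.\ $X$ preserves both real forms $V$ and $\tilde{V}$, and therefore preserves $\mf{t}=V\cap\tilde{V}$ and $\mf{p}=V\cap i\tilde{V}$. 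Preserving both eigenspaces of $\theta$ is precisely $[\theta,X]=0$, with no moment-map computation needed. Your final transport step ($\theta':=\mathrm{Ad}_{g_{0}^{-1}}\circ\theta\circ\mathrm{Ad}_{g_{0}}$) is fine and is also how the paper concludes; if you replace your third step by the two facts above, the proof goes through.
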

\begin{proof}
Consider the orbits $G\cdot\R$ and $\tilde{G}\cdot\tilde{\R}$. Since the group $\tilde{G}$ is compact, the orbit $\tilde{G}\cdot\tilde{\R}$ is necessarily closed in $\tilde{\mathcal{V}}$; consequently, $G\cdot\R$, is closed as well and possesses a minimal vector\footnote{A vector $X\in V$ is minimal if the norm  function $||-||:=\sqrt{\inner{-}{-}_{\theta}}$ along an orbit attains a minimum at $X$; i.e., $||X||\leq ||h\cdot X||$, $\forall h\in G$.}  \cite{RS}. Denote by $\mathcal{M}(G^{\mb{C}},\mathcal{V}^{\mb{C}})$ the set of minimal vectors in $\mathcal{V}^{\mb{C}}$. Assume that $X\in G\cdot \R\subset \mathcal{V}$ is minimal, then $X$ is also a minimal vector in the complex orbit: $G^{\mb{C}}\cdot \R$. However since $G$ and $\tilde{G}$ are compatible real forms (i.e $V$ and $\tilde{V}$ are compatible\footnote{The conjugation maps of $V$ and $\tilde{V}$ in $V^{\mb{C}}$ commute: $\sigma:V^{\mb{C}}\rightarrow V^{\mb{C}}$ and $\tilde{\sigma}:V^{\mb{C}}\rightarrow V^{\mathbb{C}}$, with $[\sigma,\tilde{\sigma}]=0$.}), and $\tilde{G}$ is a compact real form of $G^{\mb{C}}$, then necessarily: $$G^{\mb{C}}\cdot\R\cap\mathcal{M}(G^{\mb{C}},\mathcal{V}^\mb{C})=\tilde{G}\cdot\tilde{\R}\subset \tilde{\mathcal{V}},$$ so we deduce that $X\in G\cdot \R\cap \tilde{G}\cdot \tilde{\R}\subset \mathcal{V}\cap\tilde{\mathcal{V}}$.

Now we can choose $g\in G$ such that $g\cdot \R=X$, hence we can conjugate our fixed Cartan involution $\theta$ using $g$, and therefore work with $\R$ instead of $X$. Thus we may assume w.l.o.g that $X:=\R$. Now $\R$ leaves invariant both $V$ and $\tilde{V}$, in particular implying that: $$\R(V\cap\tilde{V})\subset V\cap\tilde{V} \ \ and \ \ \R(V\cap i\tilde{V})\subset V\cap i\tilde{V}.$$ However again by the compatibility of $V$ and $\tilde{V}$ in $V^{\mb{C}}$, we know that $V\cap \tilde{V}=\mathfrak{t}$ and $V\cap i\tilde{V}=\mathfrak{p}$ are the compact/non-compact parts respectively w.r.t our fixed Cartan involution $\theta$. So $\R$ and $\theta$ commute: $[\R,\theta]=0$, which immediately implies that $\R_+=\S$ and $\R_-=\A$ w.r.t $\theta$ as required. The theorem is proved.

\end{proof}

In the case of the Riemann tensor, this is symmetric as a bivector operator with respect to the metric, so  we have $\R=\S$, consequently, we get the immediate corollary:
\begin{cor}
A metric (of arbitrary dimension and signature) allowing for a standard Wick-rotation at a point $p$ to a Riemannian metric, has a purely electric Riemann tensor, and is consequently purely electric, at $p$.
\end{cor}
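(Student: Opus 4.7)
The plan is to read the corollary as a direct application of Theorem~\ref{T} to the Riemann curvature operator. The key structural input is that the Riemann tensor, viewed as an endomorphism $\R:V\to V$ of $V\cong\wedge^2\Omega_p(M)\cong\mf{o}(p,q)$, is \emph{symmetric} with respect to the bilinear form $\bg$ on $V$. This is the classical pair-exchange symmetry $R_{abcd}=R_{cdab}$ of the Riemann tensor, translated through the isomorphism between bivectors and $\mf{o}(p,q)$. Hence in the decomposition $\R=\S+\A$ with respect to $\bg$, the antisymmetric part vanishes: $\A=0$ and $\R=\S$.

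Next I would invoke the Wick-rotation hypothesis. By assumption, the real manifold $U$ is Wick-rotatable (in the standard sense) at $p$ to a Riemannian $\bar{U}$, so that the setup preceding Theorem~\ref{T} applies with $\tilde{V}=\mf{o}(n)$ and $\tilde{\R}\in\tilde{\mathcal V}$. In particular $\R$ and $\tilde{\R}$ lie in the same complex orbit: $G^{\mb{C}}\cdot\R=G^{\mb{C}}\cdot\tilde{\R}$. Theorem~\ref{T} then produces a Cartan involution $\theta'$ of $V$ for which $\R_+=\S$ and $\R_-=\A$ with respect to $\langle-,-\rangle_{\theta'}$. Combined with $\A=0$, this gives $\R=\R_+$ under $\theta'$, i.e.\ the Riemann tensor is purely electric at $p$ in the sense of the definition from \cite{minimal}. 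By definition this is the RPE condition.

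Finally, to land on the stated conclusion ``purely electric'' (PE, which refers to the Weyl tensor), I would appeal to the observation already made in the excerpt that RPE implies PE: the Weyl tensor is obtained from the Riemann tensor by subtracting multiples of trace pieces built from the metric, and the Cartan involution acts trivially on the metric, so if $\theta'(R)=R$ then $\theta'(C)=C$ as well. Thus the same $\theta'$ witnesses that the Weyl tensor is purely electric, completing the corollary.

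There is essentially no hard step: all the analytic and Lie-theoretic work, including the passage from closed Riemannian orbits to minimal vectors, the commuting of $\R$ with a Cartan involution, and the compatibility of the real forms $V,\tilde V$ in $V^{\mb C}$, has already been packaged into Theorem~\ref{T}. The only thing one must be careful about is the identification between bivector operators and elements of $\mf{o}(p,q)$ (so that the Cartan-involution language of the theorem applies verbatim to $\R$), and the recognition that the Riemann operator is $\bg$-symmetric, so that $\A=0$ and the theorem's conclusion collapses to $\R=\R_+$.
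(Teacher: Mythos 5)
Your proposal is correct and matches the paper's own (very terse) argument exactly: the pair-exchange symmetry $R_{abcd}=R_{cdab}$ makes the Riemann operator $\bg$-symmetric so that $\A=0$, Theorem~\ref{T} then gives $\R=\R_+$ for some Cartan involution, and the previously noted implication RPE $\Rightarrow$ PE finishes the claim. No gaps; you have simply spelled out the details the paper leaves implicit.
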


We note that this result applies for a general classes of Wick-rotatable metrics. For example, by complexification of the Lie algebras, it is possible to include Wick-rotations between all of the spaces: de Sitter (dS), anti-de Sitter (AdS), the Riemannian sphere ($S^n$), and hyperbolic space, $(H^n)$. These are all group quotients $G/H$ of different groups $G$ and $H$. This seems at first sight paradoxical since these have different signs of the curvature. Thus if $\R=g^{-1}\cdot\tilde{\R}$ as claimed in the proof, they would necessarily have the same Ricci scalar\footnote{The Riemann endomorphism has components related to the Riemann tensor in $T_pM\otimes T_p^*M\otimes(T_pM\otimes T_p^*M)^*$, i.e., $R^{\alpha\phantom{\beta\gamma}\delta}_{\phantom{\alpha}\beta\gamma}$. Thus the Ricci scalar is obtained by taking the double trace showing the Ricci scalar is the same after Wick-rotating.} . To understand this we first note that when we Wick-rotate to a Riemannian space we may risk to get either a positive definite metric, $\bg(v,v)\geq 0$, or a negative definite metric, 
$\bg(v,v)\leq 0$. The overall sign is conventional and we say that switching the sign using the "anti-isometry", $\bg \mapsto -\bg$ is a matter of convention. Note that this switch of the metric gives the same metric for the metric induced by $\bg$ on the Lie algebra. 

Consider the simple example of the complex holomorphic metric 
\beq
\bg_{\mb{C}}=\frac{1}{\left(1+z_1^2+...+z_n^2\right)^2}\left[dz_1^2+...+dz_n^2\right]
\eeq
Locally, the two real slices $(z_1,...,z_n)=(x_1,...,x_n)$, and $(z_1,...,z_n)=(iy_1,...,iy_n)$, give a neighbourhood of $S^n$ and $H^n$ respectively. However, note that for hyperbolic space, the induced metric has the "wrong" sign (it is negative definite). Therefore, considering for example the Ricci tensor (by lowering indices appropriately), we get $R_{\mu\nu}=\lambda g_{\mu\nu}$, $\lambda>0$, for both real slices, and the sign of the curvature is encaptured in whether the metric is positive or negative definite. 

\section{Discussion} 
Using techniques from real invariant theory we have considered a class of metrics allowing for a complex Wick-rotation to a Riemannian space. We have showed that these necessarily are rescricted, in particular, they are purely electric. The result is independent of dimension and signature and shows that if such a Wick rotation is allowable, then we necessarily  restrict ourselves to classes of spaces where the "magnetic" degrees of freedom have to vanish (at the point $p$).

There are many examples of purely electric spaces (see \cite{minimal, Loz} and references therein). In particular, a purely electric Lorentzian spacetime is of type G, I$_i$, D or O \cite{minimal}. Thus spacetimes not of these types provide with examples of  spaces where such a Wick rotation is  not allowed. Non-Wick-rotatable metrics include the classes of Kundt metrics \cite{kundt} in Lorentzian geometry, and the Walker metrics \cite{Walker} of more general signature. Also the metrics considered in \cite{HHY} are in general non-Wick-rotatable metrics. Note that the plane-wave metrics are  non-Wick-rotatable metrics.

These results have profound consequences for quantization frameworks where such Wick-rotation is used, since they give a clear restriction 
 of the class of metrics that allows for such a Wick rotation. Clearly, also in the context of quantum gravity, the (real) gravitational degrees of freedom will be restricted by assuming the existence of such a Wick-rotation. 

It is worth mentioning that there are quantization procedures which work in the Lorentzian signature all the way through, in particular, there is the algebraic approach to QFT on curved spacetime \cite{BFV, HW}. For details on renormalization in Lorentzian signature (without Wick rotation), see e.g., \cite{BF}.

\section*{Acknowledgements} 
This work was in part supported through the Research Council of Norway, Toppforsk
grant no. 250367: \emph{Pseudo-Riemannian Geometry and Polynomial Curvature Invariants:
Classification, Characterisation and Applications.}

\end{document}